\newtheorem{definition}{Definition}
\newtheorem{theorem}{Theorem}
\title{When Can a Distributed Ledger\\ Replace a Trusted Third Party?}
\author{
    \IEEEauthorblockN{Thomas Locher, Sebastian Obermeier, Yvonne Anne Pignolet}
    \IEEEauthorblockA{ABB Corporate Research\\Baden-D\"attwil, Switzerland\\\{thomas.locher, sebastian.obermeier, yvonne-anne.pignolet\}@ch.abb.com}
}
\begin{document}

\sloppy

\maketitle

\begin{abstract}
The functionality that distributed ledger technology provides, i.e., an immutable and fraud-resistant registry with validation and verification mechanisms, has traditionally been implemented with a trusted third party.
Due to the distributed nature of ledger technology, there is a strong recent trend
towards using ledgers to implement novel decentralized applications for a wide range of use cases,
e.g., in the financial sector and sharing economy.
While there can be several arguments for the use of a ledger, the key question is whether it can
fully replace any single trusted party in the system as otherwise a (potentially simpler) solution can be
built around the trusted party.
In this paper, we introduce an abstract view on ledger use cases and present two fundamental criteria
that must be met for any use case to be implemented using a ledger-based approach without having
to rely on any particular party in the system. Moreover, we evaluate several ledger use cases that
have recently received considerable attention according to these criteria, revealing that often participants
need to trust each other despite using a distributed ledger. Consequently, the potential of using a ledger
as a replacement for a trusted party is limited for these use cases.
\end{abstract}

\section{{Introduction}}
\label{sec:introduction}

There is tremendous interest in the blockchain, the distributed ledger that powers and secures the Bitcoin network. A plethora of applications and use cases for ledger technology have been suggested recently, ranging from registry services to smart contracts.
This interest is due to the desirable properties of distributed ledgers: Transactions are executed exactly once and an \emph{immutable} record of all transactions is maintained in a fault-tolerant and tamper-proof manner. Moreover, if the ledger is publicly available, as is the case for the Bitcoin ledger, anybody can verify the correctness of its records.
% any participant can validate the existence and provenance of the objects stored in the ledger---the \emph{bitcoins} in the Bitcoin network.

Traditionally, such features are offered through a trusted third party, which hosts multiple databases for the sake of availability and fault tolerance and vouches for the integrity of the stored data.
The main disadvantage of having a third party is that \emph{trust} is required in this party not to abuse its power and to faithfully provide its services. What is more, there is a risk that an attacker gains control over the third party, which enables the attacker to compromise the third party's services and invalidate its guarantees, e.g., the attacker could delete or modify records.

The main difference between using a third party and a ledger basically lies in the claim that the ledger removes the need to trust any particular party. In other words, trust is shifted from a specific party to a distributed system and its embedded protocols~\cite{mainelli2015sharing}. As a consequence, one needs Xto trust that the majority of the parties involved in maintaining the ledger follows the protocols, ensuring that the ledger operations are carried out as intended, and the remaining (malicious) entities cannot corrupt the system.
Since trust is a valuable and crucial commodity in any distributed system, it comes as no surprise that numerous use cases for ledger technology, other than virtual currencies, have been proposed and are being investigated.

Most proposed applications utilizing ledger technologies can be classified into three categories of growing complexity:$^3$
\begin{enumerate}
\item \emph{Registry service:} Storing digital records in an immutable and auditable distributed ledger.
\item \emph{Asset exchange:} Asset creation and ownership transfer.
\item \emph{Execute smart contracts:} Automate business processes through the execution of code.
\end{enumerate}
 The information stored in the ledger can represent physical or digital assets, identities, transactions, or contracts.  A protocol governs how entries are created, validated, recorded, and distributed.

For applications belonging to the first category, the ledger records important facts and events such as births, marriages, deaths, property deeds, intellectual property, election results, legal decisions, financial investments, insurance policies, or medical history. For such registry services, the main appeal is that records stored in the ledger are immutable and that they can potentially be used across organizational boundaries (with data protection and privacy mechanisms in place). Especially the latter is viewed as an important prerequisite for digitalization in the financial and medical industries as well as for governmental services.

Banks are particularly interested in the exchange of (digital) assets, facilitating cross-border payments and more transparent stocks, derivatives,  and options trading. Furthermore, transactions changing the ownership of or providing access to physical goods can be carried out on a ledger.

One step beyond transactions are so-called smart contracts, a (typically) distributed protocol that executes the terms of a contract autonomously with the aim of reducing the risk of error and manipulation.
It has been proposed to add support for smart contracts on top of a ledger: The contract is stored in the ledger in the form of executable code.
When a smart contract is executed, the ledger network members run the executable code according to the terms agreed upon in the contract. Since each execution starts with the same initial state, this automatic and distributed execution ensures consensus on the result among all members that execute the contract correctly.\footnote{If the code involves non-deterministic operations, shared coins~\cite{bellare1996distributed} can be used to guarantee that all parties maintain the same state.} Smart contracts offer the potential for new financial instruments, parameterized insurance contracts, and other services combining a shared database with the means for verifiable calculations or automated approval processes between two or more participants without trusted third parties.

While use cases are often described in terms of their potential compared to the state of the art, e.g., outlining potential cost savings, there is little discussion on how well the ledger actually fits the given use case. A noteworthy exception is the study by W\"ust et al.~\cite{wust17}, which introduces a methodology to determine if and what type of blockchain technology is appropriate for
a set of relevant use cases.

\textbf{Contributions.} In this paper, we focus on the specific question  whether a ledger can be used to transfer trust from key parties to a distributed system. To this end, we provide a model that is generic enough to capture a large set of use cases that have been proposed in practice and introduce criteria that any use case must necessarily meet in order to replace a trusted third party with a ledger-based solution. Furthermore, we discuss instantiations of our model for various use cases that have been implemented or proposed in recent years, showing that the criteria are not satisfied for most of them. This result implies that ledger-based technology may offer limited benefits for many currently proposed use cases.
It is worth noting that we do not address the question if and how  
incentives can lead the involved parties to act correctly and how a party can verify if a ledger implementation can be trusted. These challenges are out of the scope of this paper.

\section{{Background}}
\label{sec:background}
\textbf{Distributed Ledgers.}
The original blockchain as used in the virtual currency Bitcoin is a distributed
ledger for the serialization of transactions~\cite{nakamoto2008bitcoin}.
The ledger mechanisms provide a replicated linked list of immutable 
blocks, maintained by a large number of nodes, which 
tolerate malicious behavior of a small group of nodes and still reach consensus 
on the blocks of the ledger with a proof-of-work protocol. 

Using the ledger, virtual currency can be transferred from senders to receivers in a fully distributed 
manner, cutting out any middle man or trusted third party. This feature has gained enormous visibility 
and is envisioned to transform the financial sector and potentially bring disruptive innovation to many 
other sectors that traditionally rely on trusted third parties as well.\footnote{https://www.accenture.com/us-en/insight-highlights-insurance-blockchain-industrializing-trust, last accessed on April 9, 2018.}$^{,}$\footnote{http://www.economist.com/news/leaders/21677198-technology-behind-bitcoin-could-transform-how-economy-works-trust-machine, last accessed on April 9, 2018.}$^{,}$\footnote{https://dupress.deloitte.com/dup-us-en/focus/tech-trends/2017/blockchain-trust-economy.html, last accessed on April 9, 2018.} Among the ledger technology applications discussed in the past are supply chain provenance~\cite{kim2016towards,SupplyChain}, intellectual property rights 
management~\cite{swan2015blockchain}, proof of existence~\cite{crosby2016blockchain}, micro-payment systems~\cite{xu256blockchain}, to name but a few. Most of 
them rely on the immutability of the stored records and the fact that any party can 
verify if some properties or conditions hold. These applications are often built on top of systems that use other consensus protocols~\cite{swanson2015consensus}, such as proof-of-stake or classic Byzantine fault tolerance protocols~\cite{vukolic2015quest}.

More recently, the notion of smart contracts~\cite{szabo1997formalizing} has gained traction, with implementations in prominent projects such as
Ethereum~\cite{buterin2014next} and Hyperledger~\cite{cachin2016architecture}. Smart contracts are programs that are executed in the ledger environment and their correct execution is enforced by a consensus protocol. The rules of the contracts are encoded in a  programming language and determine under which conditions transactions or other operations are executed. For example, a contract can guarantee that virtual currency is only transferred once a certain event has happened, which has many applications, e.g., in insurance and escrow systems. 
If one of the parties breaches a contract or aborts a transaction, the 
ledger can ensure that the other involved parties obtain a compensation. 

Smart contracts are stored in the ledger and can be 
invoked by sending transactions to the contract's address. Once this transaction 
is added to the ledger, all blockchain nodes execute the contract code with 
the current state of the ledger and the transaction payload as input and 
agree on its output using the ledger's consensus 
protocol.\\

\noindent \textbf{Trusted Third Parties.}
Trust is based on the implicit or explicit assumptions that participants behave ''as they should'', i.e., that they follow a protocol as it was intended. The concrete assumptions on the abilities and behavior of (potentially malicious) participants varies from use case to use case and it is crucial to make these assumptions as clear as possible to choose the right technology and protocols. Hence, trust represents the flip-side of adversarial capabilities. For many interactions between participants, a trusted third party can expedite processes and ensure that malicious behavior is mitigated. In general, a trusted third party can execute arbitrary functions based on inputs provided by the participants.
Mainelli and Smith~\cite{mainelli2015sharing} describe three core functions a trusted third party must offer for ledger applications.
\begin{itemize}
\item \emph{Recording:} Holding the record of transactions in the event of dispute.
\item \emph{Transacting:} Preventing duplicate transactions, e.g., in order to prevent any party from selling the same thing twice (``double spending'').
\item \emph{Validating:} Confirming the existence of tradeable goods and membership of the trading community.
\end{itemize}
The first function can be implemented using standard database or distributed 
ledger technology. The second function requires additional 
mechanisms to linearize the transactions and verify the correctness of the linearization.
The implementation of the third function must be adapted for each specific use case and
can be hard to implement.
We will extend the definition of the third function in this paper and describe criteria to decide if ledger technology can provide validation for object existence, properties, conditions, and records.\\

\noindent \textbf{Reputation Systems.}
Distributed ledgers are not the only technology where trust is crucial. Trust is involved whenever two or more parties exchange virtual or physical goods.
\begin{comment}
In the simple case of Alice buying a computer from Bob's online shop, Alice wants to be sure that Bob is trustworthy and will send her a working computer and exchange it in case of problems. Bob, on the other hand, would like to be able to rely on Alice's credit worthiness. Being able to assess the trustworthiness of users is thus a prerequisite for any e-commerce system. Companies therefore have a high interest in establishing fair reputation systems and fight against false transactions and false identities created with the intent to maliciously manipulate reputation.
\end{comment}
A multitude of trust and reputation systems have been devised for various (web-based) applications, see~\cite{josang2007survey} for a survey. The authors describe different trust classes as well as categories for reputation and trust semantics that can be used in trust and reputation systems. Furthermore, they discuss centralized and distributed reputation system architectures and methods to compute reputation scores.

Theodorakopoulos and Baras~\cite{theodorakopoulos2004trust} investigate trust formed in interaction networks without a central trusted party, under the assumption that trust grows with each interaction where the participants have been following the protocol correctly.
\begin{comment}
 They model trust relationships as a graph, where nodes represent users, and edges represent direct trust relations, weighted by the amount of trust that the first user places on the second. Users have direct relations only toward the users they have interacted with in the path. They study trust inference schemes with the aim to establish an indirect relation between two users that have not previously interacted, assuming that trust is transitive, but in a way that takes edge weights into account.
\end{comment}
Recently, Carboni~\cite{carboni2015feedback}, and Schaub et al.~\cite{schaub2016trustless} presented approaches for reputation systems built on top of ledger technology. The former provides a reputation system using reputation vouchers that all participants involved in an exchange sign and that can be used to produce feedback on the trustworthiness of participants. The latter focuses on a privacy-preserving system applying blind signatures and a proof-of-stake ledger. 

It is important to note that all these reputation systems are based on the assumption that past behavior together with incentives to build further trustworthiness predicts how reliable a participant is, i.e., they cannot provide guarantees for future behavior. As a consequence, they may help to increase the trust in systems, but they cannot fully solve trust issues.

\section{{Model}}
\label{sec:model}

As mentioned before, there is significant interest in using distributed ledgers for a large range of use cases other than providing the underpinning for virtual currencies.
However, while the key characteristics closely match the requirements of recording global monetary transactions in an immutable ledger in an environment with open membership and no trust in any single entity in the system, applying distributed ledgers to other use cases is not straightforward.
In order to understand the limiting factors for a ledger-based solution with respect to trust, we must formally introduce a few concepts.

\subsection{Distributed Ledger}
\label{sec:distributed_ledger}
A distributed ledger $\Lambda$ represents a verifiable sequence of records, 
$S_k = (r_1, \ldots, r_k)$. 
The ledger is distributed in the sense that it is replicated across several machines operated by different entities in order to achieve fault-tolerance and protect against malicious actions. It
uses protocols, hashes, and digital signatures for validation, maintenance,  
authenticity, privacy, and access rights.\footnote{Protocols offering privacy and access control for distributed ledgers 
exist~\cite{kosba2016hawk,zyskind2015decentralizing} and may be required for 
some use cases. This does not affect the criteria we define.} In order to ensure the correct execution of protocols, they are also replicated and executed in a distributed manner where every correct
entity transitions to the same (correct) state. Since the correct entities share the same state, they agree on the same sequence of records $S_k=(r_1, \ldots, r_k)$. 
A set of participants (sometimes 
called users) may propose new records, while another (not necessarily disjoint) 
set of participants validates them, and adds 
them to the ledger following the aforementioned protocols.
We call participants in the second set \emph{maintainers}, which are called
validators, endorsers, orderers, or miners in different ledger approaches.

There are mechanisms in place that make it hard to change any information recorded in $\Lambda$ or 
at least make it easy to detect changes (audit trail). As a consequence, we 
assume the entries in $\Lambda$ to be \emph{immutable} with high probability, i.e., the probability that recorded items can be modified or removed from $\Lambda$ is negligible.
Thus, $\Lambda$ can provide the recording function of a trusted third party. 
As an example, the proof-of-work mechanism in Bitcoin ensures that successfully modifying blocks without disrupting the chain of hashes is highly improbable (unless an exorbitant amount of computational work is invested).
Hence the ledger protocols describe how records are created, 
validated, stored in the ledger, and distributed to all involved parties. 
More precisely, it defines a system of rules that allow the involved parties to 
exchange messages with agreed syntax and semantics.
In particular, we assume the ledger protocols guarantee the following 
properties for maintainers proposing records to be appended to the ledger. For simplicity, we let the index of the records $r_1, \ldots, r_k$ represent the \emph{logical time} at which they were added to the ledger.

\vspace{.2cm}
\begin{definition}[Ledger Consensus]\label{def:consensus}
A ledger consensus protocol has the following properties:\\
  \textit{P1 Agreement:} For a given $c\in \mathbb{N}$, if a correct maintainer holds a sequence of records $S_k=(r_1, \ldots, r_k)$, then for every correct maintainer that holds a sequence $S_j=(r'_1, \ldots, r'_j)$ we have that $r_i = r'_i$ for all $i \in \{1,\ldots, \min(j,k-c)\}$ for all $k\in \mathbb{N}$ with high probability. We say that the correct maintainers \emph{decided} on these records. \\
	\textit{P2 Validity:} A record $r_k$ decided by a correct maintainer must have been proposed by a participant.\\
	\textit{P3 Termination:} For all $k\in \mathbb{N}$, each correct maintainer eventually decides on a record $r_k$.
	\end{definition}
\vspace{.2cm}

The formal definition of consensus for distributed ledgers deviates from the classic consensus definition: Agreement is defined over a sequence and prefixes thereof. The motivation is that $\Lambda$ is considered an immutable list of records, and there may only be temporary inconsistencies with respect to the last few ($c \ge 1$) records. Decision is defined over the records that do not change anymore with high probability.

%Since it is an asynchronous system, there is no property guaranteeing 
%termination of the record creation process. In practice, one can often assume that new records are created within bounded time with high probability. Taking again Bitcoin as an example, a new block is generated roughly every ten minutes but in fact there is no guarantee that a new block is found in any time interval. 

As we do not impose any restrictions on how agreement is reached,
it is important to note that our model applies to both so-called 
\emph{permissioned} ledgers, where membership is governed by some potentially distributed authority (e.g., Hyperledger~\cite{cachin2016architecture}), and permissionless ledgers (such as the Bitcoin blockchain), where anybody can become a miner and propose blocks. 

As mentioned in the previous section, it is also possible to dynamically add smart contracts to $\Lambda$. As smart contracts encode instructions that are not part of the basic ledger functionality, the functionality of any particular smart contract may only be relevant for a subset of all participants.

\subsection{Use Case}

A \emph{use case} $U$ comprises the involved parties, the records that the 
parties create, and the objects involved in these records. Let $P$, $O$, and $R$ 
denote the set of (possible) parties, objects and records, respectively, in the 
considered use case~$U$. 

%We assume that each party $p \in P$ is equipped with a public/private key pair 
%and all public keys are known to all participants. 

Objects represent the basic elements considered in the use case. Records connect a logical time stamp (index), objects, and parties and are stored by a trusted third party or in a distributed ledger. Formally, a record $r_k \in R$ with index $k$ can be modeled as a tuple  $(k, P'\subseteq P, O' \subseteq O, s)$ where $s$ is a string describing the interaction of the parties and the objects. In general terms, the objective of a use case is achieved when a set of predicates over the records evaluates to \emph{true}. The exact nature of the predicates depends on the use case. 

%Some of the parties belong to the group of users proposing records to be 
% added to the ledger and some of the parties are maintainers of the ledger, or both.

We illustrate the meaning of the above definitions with a real-world example:
The most well-known use case of ledger-based technology is virtual currencies, 
such as Bitcoin. In Bitcoin, the bitcoins correspond to the objects, the 
parties represent the users and miners, and records are the transactions 
transferring bitcoins among the parties. An example of a predicate in this use case states that bitcoins are only transferred from user $A$ to user $B$ if user $A$ initiates this transaction. 

%The second example is Ethereum, a decentralized platform offering a virtual 
%currency (Ether) and running smart contracts. In this system, the parties 
%correspond to the users and miners of Ethereum. The smart contracts are additional protocols embedded in the ledger as
%defined above.
%The records are transactions and smart
%contracts, 
%and the objects are the objects referred to in the transactions and smart contracts.
%Note that the contracts are executed in the so-called Ethereum 
%Virtual Machine when triggered by a transaction with a smart contract as the destination 
%address. %In our notation such a transaction leads to a function record, which describes the result of the execution of the smart contract.

Recall that we focus on the implementation of a validation 
service through a distributed ledger as a replacement of a trusted third party. The specific properties and conditions to be verified depend on the use case, its objects, and the interactions of its parties.
Any object is characterized by its properties and functions. 
Note that the properties and functions of an object can be bound to 
certain parties, e.g., the current owner of an object can be saved as a reference to a party. As the sequence of records represent the state of the ''world`` of the use case, we can define predicates over single records and the whole sequence of records, which can be evaluated to \emph{true} or \emph{false}. 
The purpose of a validation service can be defined as follows.

\vspace{.2cm}
\begin{definition}[Validation Service] \label{def:validation}
A predicate $C$ over the set of records is a function mapping each record to a Boolean value, $C : R  \rightarrow \{true,false\}$.
Analogously, predicates can be defined over the sequence $S_k$ of the first $k$ records, $C_k: S_k \rightarrow \{true,false\}$. 
A validation service evaluates such predicates and provides the corresponding information to interested parties.  
\end{definition}
\vspace{.2cm}

The validation service can be used to verify if a predicate is satisfied by a single record or sequence of records. The predicates and their evaluation depend on the use case and the parties must agree on them. Note that the set of predicates that define the use case goals are typically a superset of the predicates that are validated by the validation service.

In practice, a validation services is used to \emph{confirm} the existence of 
an object and that all records involving this object conform with this 
object's properties and functions.

In the easiest case, this service is provided by a trusted third party that 
stores all records and performs all validation and verification actions.
As an example, a validation service for monetary transactions validates each 
transaction, checking that not more money is spent than the sender possesses.

When replacing the trusted third party with a distributed ledger $\Lambda$, 
its fixed protocols and the distributed nature of $\Lambda$ entail that 
\emph{no trust} is required that any particular entity adheres to the rules of 
the protocol. Rather, it is assumed that a majority of the parties execute the protocols correctly. Thus, a ledger can store records in a tamper-proof manner.
Moreover, the protocol can also facilitate validation by ensuring that all 
participants agree on the result of the corresponding procedures.
This functionality can often be implemented through smart contracts, which in turn can create 
more records to be added subsequently.

Note that a distributed ledger must meet the ledger consensus properties given in Definition~\ref{def:consensus} in order to implement a validation service. The agreement condition must hold as validation relies on the consistent information in the distributed ledger, i.e., all
correct maintainers must decide on the same records. Moreover, there must be some progress eventually (termination property P3) and only information proposed by participants must be entered (validity property P2). Furthermore, if the agreement property P1 only holds probabilistically for a ledger implementation, then the use case goal must be expressed in terms of this probability.

There are different ledger implementations for Definitions~\ref{def:consensus} and~\ref{def:validation}. In this paper we do not study how they are achieved and whether the incentive system of the ledger is appropriate for a given use case.  
Rather, we analyze conditions that must be met in addition to the ledger consensus properties in order to fully replace a trusted third party for any use case requiring a validation service.

%Having established a formal basis, we can define criteria for the applicability of blockchain-based technology at a more generic level, which is the subject of the subsequent section.

\section{{Criteria}}
\label{sec:criteria}

For the model defined above, we describe two general criteria that any use case must meet to provide a validation service through a distributed ledger without any trusted third party.
If any of the two (or both) criteria are not met, a ledger-based solution may only offer limited benefits and it is likely that other more efficient or more practical solutions exist.

\subsection{Object Creation Criterion}

\subsubsection{Definition}

Each party $p \in P$ is allowed to create records, and any existing object can 
be involved in the created records without restrictions. A fundamental 
question is where these objects come from. There are two cases that are 
''safe`` in the sense that no trust issues arise. In the simplest case, the 
set $O$ of objects is predetermined, i.e., each record $r \in R$ can only 
involve objects from a fixed set. The second case is the creation of new 
objects where the validation service confirms their existence. For example, 
new objects can be added to the sequence of records as a consequence of the predicate 
evaluation of a prefix of the sequence. %I.e., the protocol 
%underlying the system (and the ledger) is executed in a distributed manner to 
%agree on the creation of a new object. 
Hence, an agreed-upon predicate decides on the existence of an object in this case.

In order to implement the validation service using a distributed ledger in the second case, the protocols must ensure that the creation of objects is validated with an existence or creation predicate according to the properties described in Definition~\ref{def:consensus}, i.e., the protocol underlying the system (and
the ledger) is executed in a distributed manner to agree on
new objects. A creation process is called
consensus-based if it satisfies these properties. 

These cases are summarized in the following definition.

\begin{definition}[Object Creation Criterion]
Any use case $U$ meets the \emph{object creation criterion} if and only if
for all  $o\in O$ it holds that $o$ has been defined at $t=0$ or object creation is consensus-based.
\end{definition}

In order to understand the significance of this criterion, consider the case 
when the object creation criterion is violated. 
If the set $O$ is neither fixed nor determined by the underlying consensus 
protocol according to predicate evaluation, then parties can create records involving new objects individually. 
This power bears the potential to cause problems.

%\textbf{Denial-of-Service Attack.}
%If any party can create an arbitrary number of new objects, the system . Even if an upper bound on the number of created objects per party were somehow imposed, which requires the ledger to further keep track of the number of object creations for each party, this mechanism would be entirely ineffective if $P$ is not restricted, i.e., there can be an arbitrary number of parties: In this case, a \emph{Sybil attack}~\cite{douceur02} can be launched where a large number of (bogus) parties is created, circumventing the upper bound on the number of newly created objects. 
%Moreover, even if membership and object creation is limited, i.e., new parties and objects cannot freely be created, there can be other potentially more harmful attacks.
If the ledger keeps track of object ownership (which is the case for Bitcoin), malicious parties may try to \emph{anticipate} the creation of specific objects and enter records involving them prematurely into the ledger to falsely claim ownership. This attack is particularly troublesome when the identifier of future objects can be guessed based on the existing objects. The identifier may simply be a function of the objects characteristics potentially extended with a (predictable) serial number.

A possible scheme to protect against such premature object creation attacks is to ensure that there is enough randomness in the object identifiers to ensure that no future object (identifier) can be anticipated with overwhelming probability. However, even such a scheme can fall short of protecting the object creation process: Assume that multiple parties are involved in the creation of a new object, e.g., multiple employees in a company working on an instance of a product. Any party with knowledge about new objects and identifiers can prematurely add them to the ledger before the rightful owner has the chance to enter it. 
If $P$ is not restricted, the malicious insider can claim ownership for a newly created product under an arbitrary new identity, which may make it hard for the rightful owner to identify the culprit.

There is a simple solution to this problem: Restrict the object creation to specific parties. For example, an object referring to a specific product can only be entered by the party corresponding to the company that is known to produce this product. While this solution appears simple, it has a major downside. The parties must \emph{trust} specific parties to behave correctly. Returning to the example of the company selling certain products, a party must trust the party representing the company that, e.g., the objects it creates indeed correspond to (physical or digital) product instances. This is problematic as a key requirement for ledger-based systems is typically that no single entity must be trusted as mentioned earlier. Having to suddenly trust some party may invalidate the approach to use ledger-based technology in the first place because there are often much simpler solutions when trust is not an issue. Simply put, if we trust a party, it can simply maintain a database with all its records and 
report them upon request, without the need of a global and distributed ledger. The added benefit for the company is that it can further control access to its data, which is significantly harder to achieve when taking part in a global distributed system.

Instead of a specific set of trusted parties, it is possible to use a quorum-based approach
to create new objects in permissioned ledgers.
Quorum and voting based approaches are generally harder to implement in permissionless ledgers due to the fact that a malicious user can create an arbitrarily large number of (bogus)
parties voting according to this user's interest. In either case, the object creation criterion can be met if the participants agree on the predicate evaluation because the protocol controls the decision.

Thus, if we do not want to trust specific parties and object creation must be controlled, the system must meet the object creation criterion.
This observation is captured more formally in the following theorem.

\begin{theorem}
Any use case $U$ where a distributed ledger can replace a trusted third party meets the object creation criterion.
\end{theorem}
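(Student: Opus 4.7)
The plan is to prove the contrapositive: assuming that a use case $U$ does not meet the object creation criterion, I will show that a distributed ledger cannot fully replace a trusted third party, i.e., trust in some specific party is necessarily required. This matches the structure of the discussion preceding the theorem, which essentially identifies the trust that arises when objects can appear outside a consensus-controlled process.

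First, I would unpack the negation of the criterion: there exists an object $o\in O$ that is neither in the predetermined set available at $t=0$ nor created through a consensus-based procedure (one satisfying the properties of Definition~\ref{def:consensus}). In particular, the appearance of $o$ in some record $r_k\in R$ is not the result of a predicate evaluation on the prefix $S_{k-1}$ agreed upon by the maintainers, so the validation service cannot, on its own, decide whether $o$ legitimately exists or who should be permitted to involve it in a record. Next, I would formalize what this entails for participants: since $o$'s legitimacy is not derivable from the ledger state via an agreed predicate, any party wishing to rely on records involving $o$ must obtain the assurance of $o$'s existence (and of who owns it or is entitled to reference it) from outside the ledger protocol itself.

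The core step is then to argue that this external assurance must come from a fixed, designated party (or a fixed set of parties). Here I would invoke the two attack lines already sketched in the paper. If $P$ is unrestricted and any party can unilaterally introduce records involving $o$, then a malicious participant can anticipate or race the rightful creator and enter $o$ into the ledger first, violating the use case predicates that tie $o$ to a specific party. Even if object identifiers are drawn from a large random space, the insider scenario shows that whoever legitimately knows $o$ in advance can be impersonated. Ruling out both attacks forces the protocol to accept new records involving $o$ only from a specific, identified party, and to reject those from others. By Definition~\ref{def:validation}, the validation service must then treat this party's assertions as ground truth for the existence predicate of $o$; this is precisely a trust assumption on a particular party. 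Consequently, the ledger has not discharged trust but merely repackaged it: the designated party plays exactly the role that a trusted third party would, and, as noted in the paper, a simpler database-backed solution on that party suffices.

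The main obstacle I anticipate is the middle step, namely showing rigorously that no alternative mechanism (other than designating a trusted party or using consensus-based creation) can prevent both the anticipation and the insider attack. I would address this by a short case analysis: either the predicate controlling admission of $o$ is evaluable from the ledger state alone and agreed upon by maintainers, in which case creation is consensus-based (contradicting the assumption), or it depends on information external to the ledger, in which case some party must vouch for that information and is therefore trusted. This dichotomy, combined with Definitions~\ref{def:consensus} and~\ref{def:validation}, closes the argument and yields the contrapositive, hence the theorem.
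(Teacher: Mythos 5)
Your proposal is correct and follows essentially the same route as the paper: both argue the contrapositive, observing that an object neither predetermined nor consensus-created must be validated by some mechanism outside the agreed ledger protocol, and that whichever party (or set of parties) executes this mechanism must be trusted, reintroducing a trusted third party. Your version merely elaborates the paper's terser argument by folding in the anticipation and insider attacks from the preceding discussion and making the final dichotomy explicit.
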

\begin{proof}
If the object creation criterion is not met, objects are not pre-defined and the creation is not consensus-based. This entails that some party can create some objects without the agreement of other parties, violating property P1 of Definition~\ref{def:consensus}. Consequently, the creation of objects is not validated through some pre-defined predicate in the system as in Definition~\ref{def:validation} and thus another mechanism is needed to validate them. If there is no trust in this validation mechanism, then there is no trust in the created objects, therefore trust in this mechanism is needed. Since this mechanism must be executed by some party, or a set of parties, trust in this party is required, i.e., an additional trusted (third) party, potentially comprising multiple entities, is needed.
\end{proof}

\subsection{Internal Predicate Criterion}

\subsubsection{Definition}
The second criterion is concerned with the fact that the stored records serve to indicate that some events (e.g., transactions) happened and certain conditions must hold. The purpose of the validation service in a ledger-based system is to use the stored records to 
evaluate some predicates over these records. The exact nature of the predicates depends on the use case. Using again the example of a virtual currency, a predicate may determine if a specific party holds a certain amount of currency,
which is a prerequisite for a monetary transaction.
If the truth value of a predicate $C$ can always be derived by examining only the current records stored in the ledger, %or the evaluation is consensus-based,
we say that $C$ is an \emph{internal} predicate. 
Any predicate that is not internal, regardless of whether the predicate can be verified at all, is called \emph{external}.

Given these definitions, we are now in the position to state the definition of the \emph{internal predicate criterion}.

\vspace{.2cm}
\begin{definition}[Internal Predicate Criterion] Any use case $U$ meets the \emph{internal predicate criterion} if and only if all predicates of $U$ are internal.
\end{definition}
\vspace{.2cm}

An example of an internal predicate in the context of virtual currencies is $C(r_i) = true$ for any record $r_i$ transferring funds from party $p$ to $p'$ if and only if the funds have indeed been transferred.
For this example, the predicate $C$ is trivially true as virtual currencies only exist as part of transactions and therefore the recording of a transaction is tantamount to transferring the currency.
Consequently, since only the current sequence of records must be examined to evaluate $C$, it is an internal predicate.

An example of a use case and associated predicate that does not meet the internal predicate criterion is the following. Assume that the ledger records cargo movement. The predicate for a particular record indicates 
whether a specific container has been shipped to a certain location. This predicate is external as the ledger cannot prove the position of a container in the physical world.

It is important to note that the internal predicate criterion is not necessarily violated whenever the predicate refers to objects and their state in the physical world.
In fact, it is possible to have predicates referring to physical objects that are internal, and there are also purely ``digital predicates'' that are external. As an example of the former, consider a system with a ledger that states ownership for any plot of land. For this clearly fictitious example, we assume that this ledger is globally recognized as the binding authority on this matter. Thus, if there is a transaction of selling a plot recorded in the ledger, the buyer will immediately be the rightful owner, i.e., the predicate is internal because the ledger alone is the authority on some aspect of the real world. 
As an example for the latter, consider the classic example of selling a website through a smart contract. The contract states that the website changes its owner once the new owner has transferred a predetermined amount to the current owner. However, the ownership only truly changes once the registrar updates the corresponding ownership record, i.e., it is an external predicate. The internal predicate criterion in this example could be met by getting rid of the registrar and, as in the preceding example, use the ledger as the binding authority. In this situation, the rightful owner (according to the ledger) can set up the website on any server and announce the website from this server publicly. Consequently, consistency between the ledger and external accounts must be guaranteed for external predicates, which requires trust in the consistency mechanisms.

Given that the use of a distributed ledger is typically bound to specific requirements, particularly, no trust in any single entity, the internal predicate criterion has fundamental implications: If a predicate is evaluated externally, its value must be provided by some party or parties, which may result in multiple undesirable situations. First, the party or parties may not be part of $P$, i.e., a completely \emph{different system} has a direct impact on the considered system. Problems in this situation can be that the other system changes unilaterally, which requires supervision and adaptation, or that the other system ceases to cooperate with the considered system or terminates its services altogether. Second, the considered system may depend on a single party, e.g., a specific organization or company. Thus, \emph{trust} is again required in this party to report the correct predicate value, which is, as stated before, often not desirable. 
As for the object creation criterion, violation of the internal predicate criterion is a strong indicator that a ledger-based solution may not be the right approach to implement a specific use case---and other options must be examined closely.

As for the object criterion, we can also state the significance of the internal predicate criterion more formally in the form of a simple theorem.

\begin{theorem} Any use case $U$ where a distributed ledger can replace a trusted third party meets the internal predicate criterion.
\end{theorem}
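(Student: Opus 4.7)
The plan is to mirror the structure of the preceding proof and argue by contrapositive: assume the internal predicate criterion is violated and exhibit an unavoidable point at which trust in some party (beyond the ledger protocol itself) must be introduced. So I would begin by supposing there exists at least one predicate $C$ of the use case $U$ that is external, i.e., whose truth value cannot be determined from the current sequence $S_k = (r_1,\ldots,r_k)$ stored in $\Lambda$ alone.

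The next step is to use the definition of a validation service (Definition~\ref{def:validation}) together with the ledger consensus properties (Definition~\ref{def:consensus}) to observe that the ledger can only guarantee agreement, validity, and termination with respect to records proposed by participants; it provides no intrinsic mechanism for obtaining the truth value of a predicate that depends on information outside $S_k$. Therefore, to evaluate $C$, its value (or the underlying external fact) must be supplied to the ledger by some entity: either a party in $P$ or an entirely separate system. I would then follow the two-case dichotomy already outlined in the paragraph preceding the theorem: in the first case, the supplying entity is an external system whose behavior is neither governed nor validated by the ledger protocols, so the use case's correctness hinges on trusting that system; in the second case, the supplying entity is a specific party (or a fixed set of parties) in $P$, which by the same reasoning as in the proof of the object creation theorem means that this party must be trusted to report faithfully. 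Either way, replacing the trusted third party is not achieved, contradicting the hypothesis.

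The main obstacle I anticipate is making the ``no intrinsic mechanism'' step airtight. One could object that a smart contract or a voting scheme among maintainers might internally decide on the value of $C$. To close this gap I would argue that such a mechanism can only reach consensus on what participants \emph{report}, not on the external fact itself; consistency between the reported value and the true state of the external world is exactly what the internal predicate criterion requires and what, by assumption, is missing. Hence any scheme that ``resolves'' $C$ from inside the ledger either (a) silently redefines $C$ to be an internal predicate over reported values (changing the use case, not solving it) or (b) delegates the authoritative value to a distinguished party or external system, reintroducing trust. Once this point is articulated carefully, the remainder of the argument reduces to invoking the reasoning already used for the object creation criterion, so the proof is short and essentially structural.
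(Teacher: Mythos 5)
Your proposal is correct and follows essentially the same route as the paper's own proof: argue by contraposition that an external predicate requires information not derivable from the ledger's records, and that whoever supplies this information (a party in $P$ or an outside system) must be trusted, reintroducing a trusted third party. Your additional discussion of why a smart contract or maintainer vote cannot close the gap (consensus on \emph{reported} values is not consensus on the external fact) is a worthwhile elaboration, but it does not change the underlying argument, which matches the paper's.
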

\begin{proof}
If the internal predicate criterion is not met, then there is at least one predicate $C$ that is not internal, i.e., it may not always possible to determine its truth value based on the records in the ledger alone. Consequently, there is some object property external to the distributed ledger that is required to evaluate $C$. In this case, a trusted party external to the distributed ledger system is needed to provide the missing information truthfully, otherwise the predicate cannot be evaluated.
\end{proof}

%In the subsequent section, we study and evaluate some proposed use cases for ledger technology that have received significant attention with respect to both criteria. 

\section{{Use Case Evaluation}}\label{sec:use_cases}

In this section, we investigate whether the criteria defined above are met for a selection of proposed use cases. Note that we do not intend to provide an exhaustive list of use cases but show how to apply the criteria to prominent use cases that have been proposed or implemented in recent years.

\subsection{Virtual Currency}

The prototypical use case of distributed ledgers is to enable a virtual currency such as Bitcoin. In this use case, all virtual currency is either ``pre-mined'', i.e., exists from the very beginning, or is created in each block that is mined, and transferred using transactions recorded in the ledger. Recall that in this use case the parties are the currency holders, the objects are the ``coins'', and the records are the transactions.

The object creation criterion is met in this case because either no new coins are ever created or new coins are created directly by the miners executing the ledger protocol, i.e., it is consensus-based.
As mentioned above, the quintessential predicate for virtual currencies asks whether some funds have been transferred in a transaction. We found that the internal predicate criteria is also met as it is internal, i.e., the ledger itself acts as the only valid source of information regarding ownership of funds.
Since there is no additional goal in this use case and both criteria are met, a distributed ledger can be used to replace a trusted third party as is evident from the success of Bitcoin.

\subsection{Notary Service for Ownership and Provenance}

A proposed use case for ledger-based technology is to provide a service to notarize ownership of physical objects or intellectual property.
There are multiple emerging companies offering provenance platforms for a variety of goods.
One example is tracking the ownership of diamonds: The idea is that diamonds can be \emph{uniquely} identified using more than 40 features, which can be used to create unique identifiers. Insurance companies add records of diamonds into their ledger to immutably record ownership. Such a notary service is interesting for insurance companies because it can help them recoup costs after paying out an owner when a diamond has been stolen or lost since 
the diamond belongs to the insurance company if it re-emerges. Moreover, this ledger should deter thieves from offering goods on online retail marketplaces as the ledger can identify the offered goods as stolen.

The participants of this use case are the insurance and mining/production companies and their clients, the objects are the diamonds or other luxury goods and the records represent proofs of ownership or provenance.

This use case does not meet the object creation criterion when parties can add new objects without a consensus-based protocol. This problem is addressed in that only insurance companies can create new objects. Since trust in the insurance companies is required anyway, entrusting the object creation process to the insurance companies is not a major concern. A feature to enable customers to create their own records is planned, which will violate the object creation criterion.

A specific problem for the diamond use case is that the argument of an economic disincentive for fraudsters to change the characteristics of a diamond is wrong: It is true that any physical change to the diamond results in a loss of value but a stolen diamond has no value at all if it cannot be sold. Therefore, there is an incentive to change the defining characteristics and register it as a new diamond. Adding a diamond to the ledger may require a proof of provenance, which may be 
hard to obtain for a fraudster. Thus, trust in mining companies to correctly certify provenance is required.

Several crucial predicates, e.g., asking whether an object has been lost or stolen, cannot be verified based on the distributed ledger. As discussed above, uniqueness cannot be determined either in the sense that the same object can be registered again with different characteristics. It follows that the internal predicate criterion is not met either. Given that both criteria are not met, it is essentially a system based on trust in certain parties, e.g., the insurance and mining companies, there is no clear incentive to use a ledger-based solution for this use case. Thus, a (distributed) database shared among the key parties
can implement exactly the same functionality.

\subsection{Inter-Bank Payments}

There is substantial interest in ledger-based technology in the financial industry as it is thought to simplify numerous complex processes inside banks and also among banks. We will focus on the impact that a ledger might have on inter-bank payments. Nowadays, money passes through several banks (each step incurring a certain fee) in a typical money transfer. The complexity of this process entails that a transaction often takes one or more days to process.
An obvious idea is to replace these complex mechanisms with a ledger-based approach, which would have the potential to significantly reduce transaction times, offer fault tolerance, traceability of transactions, and simpler reconciliation of accounts.

In this use case, the participants are banks and their clients. The objects are monetary units of a certain currency, and the records represent the transferring of money between accounts. 

Unlike the Bitcoin example, the object creation criterion is not met in this use case as the creation of money of non-virtual currencies is not consensus-based. Thus, trust in governments and banks is still required to handle the creation of money as it is done today. Naturally, this violation is not just acceptable but desirable for the banks since it ensures that they will remain an important party in the financial system. Regarding the crucial predicate asking whether funds have been transferred, the internal predicate criterion can be met if the content of the distributed ledger is considered to be (legally) binding. In this case, the accounts in all banks must be updated in accordance with the distributed ledger and any discrepancy must be identified and appropriate corrective actions taken to conform to the transaction history stored in the ledger.

Thus, ledger technology can be used to implement inter-bank payment systems but trust in the banks and governments is still required.
While it is technically feasible to use a ledger for this purpose, there are other challenges that must be overcome, in particular compliance with regulatory requirements and standards and the implementation of strong governance and data controls.

\subsection{Insurance Fraud} 

In the near future, insurance customers will be able to take pictures of broken goods and insurance claims forms with their phones and have them processed by a ledger registry~\cite{insurance}. This registry can be used by insurers to prevent customers from claiming payouts from multiple insurers. In the past, this could easily take weeks to investigate. A ledger registry could reduce the process of identifying duplicate claims to a matter of seconds.
More precisely, such a system could work as follows: customers make insurance claims through their insurer, where each claim record is time-stamped and signed by the system. The system inspects the ledger for duplicate claims. When combined with smart contracts, the claim can be settled immediately if the claim is deemed valid.

The participants in this use case are insurance companies and their customers. The objects are the insured goods and the records represent the claims. Note that in this use case the insurance companies have to trust each other and customers have to trust the insurers. However, the insurers do want to place trust in the customers. 

For the object creation criterion we need to distinguish between the insurance of specific objects and generic collections of objects (e.g., for renters/homeowners insurance or home contents insurance). In the second case, the only objects are contracts and claims for which the object creation criterion is met due to protocols. However, the first case does not meet the object creation criterion because the creation of objects for the insured goods suffers from the same issues as the notary service use case.

A much more important problem in this use case is the internal predicate criterion. The system must be able to evaluate the predicate asking whether a certain object is damaged or broken, which is not possible without a physical inspection. Pictures and videos do not provide sufficient proof and could be used by two different customers for claims of two instances of an object even though only one of them is broken. Moreover, a picture does not correspond to ownership and fraudulent claims could be made.

In summary, this ledger-based system cannot prevent insurance fraud in this case. In addition to the trust among insurance companies and from customers to the insurance companies, the existence of a trusted third party that verifies the existence and ownership of goods when establishing contracts and mechanisms to validate the claims is required.

\subsection{Microgrid Energy Trading}
In this use case, energy is traded using the ledger~\cite{energytrading}. Producers can offer a certain amount of energy that they produce, e.g., using solar panels, and a (smart) contract is settled with energy consumers. 

The participants are the energy providers, distributors and consumers, an object corresponds to a certain amount of energy and the records track the amount of energy produced, transmitted, and consumed as measured by smart meters.

The object creation criterion is not met because, in this scenario, trust in the smart meters reporting the energy production and consumption is required. A hacked firmware of one or multiple smart meters can report higher or lower values, which means the system is based on trusting the reported values.
For the same reason the predicate asking whether a certain amount of energy has been produced cannot be verified using the distributed ledger itself: The layout of the power grid can be maliciously modified such that two smart meters are installed sequentially instead of in parallel, which would double the reported production. Therefore, an inspection of the physical conditions is required, which means the system is based on trusting a party and its inspection report. As a result, the internal predicate criterion is not met.

While energy trading appears to be an interesting use case for a ledger for the purpose of automating payments based on reported energy values, it does not replace third parties and trust in the overall system.

\subsection{Supply Chain Management}
An often discussed use case is to employ a ledger to verify supply chains~\cite{SupplyChain,kim2016towards}. In this scenario, it is assumed that all suppliers use the ledger for material tracking and verification of certain attributes during each step of production and transportation.

Hence, the participants are the involved parties on the supply chain, the objects are the materials and goods produced, and the records represent the exchange of assets or modifications of assets and their properties.

It depends on the process implemented for adding goods into the ledger whether the object creation criterion is met: If it is consensus-based leading to an agreement among the involved parties, it can be satisfied, otherwise it requires trust. 

The internal predicate criterion is more problematic. First, the link to the material must be verified and it must be ensured that the object referred to in the ledger exists, i.e., the predicate ``does the object exist?'' must be evaluated. Furthermore, a supplier needs to state certain properties of an object, each property corresponding to a predicate, and enter this information into the ledger. However, these properties must be (physically) verified, and the entity that verifies this information must be trusted, which means the predicate is external. Even when electronic devices are used to verify properties, these devices are susceptible to hacks and modifications and therefore must be trusted.
In summary, supply chain verification requires trust in specific parties even when using a distributed ledger.

\subsection{Location Tracking}
 
Another proposed use case is to track containers on a ship using a distributed ledger.\footnote{https://www.technologyreview.com/s/603791/the-worlds-largest-shipping-company-trials-blockchain-to-track-cargo/, accessed on April 9, 2018.} In this scenario, the ledger is used to track all relevant information regarding a ship's cargo in an integrity-preserving manner. It is claimed that the ledger will allow the carrier to improve its processes by eliminating paperwork. Furthermore, smart contracts could be setup to automatically trigger payments after the cargo has arrived at its destination. 

The participants in this use case are the companies responsible for shipping and the owners of the cargo. The objects are the containers and the records represent the location of the objects at a certain point in time.

As for supply chain management, it depends on the properties of the object creation 
process whether the object creation criterion is satisfied. The first issue with respect to the internal predicate criterion is that it must be possible to verify that a container object inserted into the ledger corresponds to a real physical container. There must be trust in the entity that verifies this information as the ledger itself cannot perform this validation. Moreover, the internal predicate criterion is also not addressed later in this scenario: Neither the ledger nor a standard electronic document management approach solves the problem that an authenticated link between the physical container content with the electronic record must be established, which implies that the predicate asking whether a cargo load contains a specific object is external. Thus, there must be a third party verifying that the initial cargo load complies with the record. If the container might be opened during transport, e.g., for inspection, a second problem arises, which is to prove that nothing has been added, removed, or altered during this inspection. Finally, the entity reporting that the container has arrived at the destination must be trusted as GPS signals can be forged.
%Given the description of the scenario, electronic document management systems seem to address the problem well~\cite{adam2007implementing}. 
%The concept of decentralized trust management was already described in 1996~\cite{Decentralizedtrustmanagement}.
Thus, we can conclude that %this use case is fundamentally a data management problem and that 
trust towards some entities is crucial and cannot be removed by a ledger-based approach.

\section{{Conclusions}}

While the distributed and consensus-based nature of ledgers ensures immutability of records and fault tolerance without requiring trust among the participants, how these properties extend to trust in ledger use cases in a generic and formal manner has not been studied.

We have presented two fundamental criteria to answer the question whether a use case can be implemented without trusting any specific party using a ledger-based approach. In particular the criteria help to identify under which conditions trust can be technically guaranteed by ledger-based approaches and where additional, potentially non-technical mechanisms, such as legal frameworks are necessary. Our criteria reveal that using a ledger does not solve the trust issue for many use cases implemented or proposed in recent years.  
Moreover, the criteria bring to light that the ledger only becomes truly powerful when
it is recognized as the supreme authority in that its consensus protocol controls the object creation process and the predicate verification process is internal.

Regarding the internal predicate criterion, it is sometimes technically possible to turn an external predicate into an internal predicate. E.g., web domain registrars could be fully replaced by a ledger-based system in the future. For some use cases the organizational hurdles might be too high to overcome, e.g., due to conflicting legal frameworks. It is important to note that in some use cases it is impossible to replace trust by a ledger: it will always be necessary to involve another party for the verification of certain predicates. In general, predicates depending on measurements of a sensor will remain external because the sensor is a physical object that cannot be represented entirely in a digital system.

As a consequence, while ledger technology bears potential for new applications, many use cases will still require a trusted third party.


\begin{thebibliography}{10}

\bibitem{energytrading}
N.~Z. Aitzhan and D.~Svetinovic.
\newblock {Security and Privacy in Decentralized Energy Trading through
  Multi-signatures, Blockchain and Anonymous Messaging Streams}.
\newblock {\em IEEE Transactions on Dependable and Secure Computing}, 2016.

\bibitem{bellare1996distributed}
M.~Bellare, J.~A. Garay, and T.~Rabin.
\newblock {Distributed Pseudo-Random Bit Generators---A New Way to Speed-Up
  Shared Coin Tossing}.
\newblock In {\em Proc. 15th Symposium on Principles of Distributed Computing
  (PODC)}, pages 191--200, 1996.

\bibitem{buterin2014next}
V.~Buterin et~al.
\newblock {A Next-Generation Smart Contract and Decentralized Application
  Platform}.
\newblock {\em Ethereum White Paper}, 2014.

\bibitem{cachin2016architecture}
C.~Cachin.
\newblock {Architecture of the Hyperledger Blockchain Fabric}.
\newblock In {\em Workshop on Distributed Cryptocurrencies and Consensus
  Ledgers (DCCL)}, 2016.

\bibitem{carboni2015feedback}
D.~Carboni.
\newblock {Feedback Based Reputation on Top of the Bitcoin Blockchain}.
\newblock {\em arXiv preprint 1502.01504}, 2015.

\bibitem{crosby2016blockchain}
M.~Crosby, P.~Pattanayak, S.~Verma, and V.~Kalyanaraman.
\newblock {Blockchain Technology: Beyond Bitcoin}.
\newblock {\em Applied Innovation}, 2, 2016.

\bibitem{josang2007survey}
A.~J{\o}sang, R.~Ismail, and C.~Boyd.
\newblock {A Survey of Trust and Reputation Systems for Online Service
  Provision}.
\newblock {\em Decision Support Systems}, 43(2), 2007.

\bibitem{kim2016towards}
H.~M. Kim and M.~Laskowski.
\newblock {Towards an Ontology-Driven Blockchain Design for Supply Chain
  Provenance}.
\newblock {\em arXiv preprint arXiv:1610.02922}, 2016.

\bibitem{kosba2016hawk}
A.~Kosba, A.~Miller, E.~Shi, Z.~Wen, and C.~Papamanthou.
\newblock {Hawk: The Blockchain Model of Cryptography and Privacy-Preserving
  Smart Contracts}.
\newblock In {\em Proc. 37th IEEE Symposium on Security and Privacy (S\&P)},
  2016.

\bibitem{mainelli2015sharing}
M.~Mainelli, M.~Smith, et~al.
\newblock {Sharing Ledgers for Sharing Economies: An Exploration of Mutual
  Distributed Ledgers (a.k.a.\ Blockchain Technology)}.
\newblock {\em The Journal of Financial Perspectives}, 3(3), 2015.

\bibitem{nakamoto2008bitcoin}
S.~Nakamoto.
\newblock {Bitcoin: A Peer-to-Peer Electronic Cash System}.
\newblock 2008.

\bibitem{SupplyChain}
M.~Pilkington.
\newblock {Blockchain Technology: Principles and Applications}.
\newblock {\em Research Handbook on Digital Transformations}, page 225, 2016.

\bibitem{schaub2016trustless}
A.~Schaub, R.~Bazin, O.~Hasan, and L.~Brunie.
\newblock {A Trustless Privacy-Preserving Reputation System}.
\newblock In {\em Proc. IFIP International Information Security and Privacy
  Conference (IFIP SEC)}, 2016.

\bibitem{swan2015blockchain}
M.~Swan.
\newblock {\em {Blockchain: Blueprint for a New Economy}}.
\newblock O'Reilly Media, Inc., 2015.

\bibitem{swanson2015consensus}
T.~Swanson.
\newblock {Consensus-as-a-Service: A Brief Report on the Emergence of
  Permissioned, Distributed Ledger Systems}.
\newblock {\em Report, available online
  http://www.ofnumbers.com/wp-content/uploads/2015/04/Permissioned-distributed-ledgers.pdf,
  Apr}, 2015.

\bibitem{szabo1997formalizing}
N.~Szabo.
\newblock {Formalizing and Securing Relationships on Public Networks}.
\newblock {\em First Monday}, 2(9), 1997.

\bibitem{theodorakopoulos2004trust}
G.~Theodorakopoulos and J.~S. Baras.
\newblock {Trust Evaluation in Ad-Hoc Networks}.
\newblock In {\em Proc. 3rd Workshop on Wireless Security (WiSe)}, 2004.

\bibitem{vukolic2015quest}
M.~Vukoli{\'c}.
\newblock {The Quest for Scalable Blockchain Fabric: Proof-of-Work vs. BFT
  Replication}.
\newblock In {\em International Workshop on Open Problems in Network Security},
  pages 112--125, 2015.

\bibitem{insurance}
D.~Weinland.
\newblock {Using Blockchain to Catch Insurance Fraud}.
\newblock {Financial Times Fintech Briefing}, 2017.

\bibitem{wust17}
K.~W{\"u}st and A.~Gervais.
\newblock {Do you need a Blockchain?}
\newblock {\em IACR Cryptology ePrint Archive}, 2017:375, 2017.

\bibitem{xu256blockchain}
A.~Xu, M.~Li, X.~Huang, N.~Xue, J.~Zhang, and Q.~Sheng.
\newblock {A Blockchain Based Micro Payment System for Smart Devices}.
\newblock {\em Signature}, 256(4936), 2016.

\bibitem{zyskind2015decentralizing}
G.~Zyskind, O.~Nathan, and A.~Pentland.
\newblock {Decentralizing Privacy: Using Blockchain to Protect Personal Data}.
\newblock In {\em IEEE Security and Privacy Workshops (SPW)}, pages 180--184,
  2015.

\end{thebibliography}
\end{document}